\newtheorem{Thm}{Theorem}
\theoremstyle{definition}
\newtheorem{Def}[Thm]{Definition}
\begin{document}

\title{Hybrid quantum linear equation algorithm \\
and its experimental test on IBM Quantum Experience}

\author{Yonghae Lee}
\affiliation{
Department of Mathematics and Research Institute for Basic Sciences,
Kyung Hee University, Seoul 02447, Korea
}
\author{Jaewoo Joo}
\affiliation{
School of Computational Sciences, Korea Institute for Advanced Study, Seoul 02455, Korea}
\affiliation{
Clarendon Laboratory, University of Oxford, Parks Road, Oxford OX1 3PU, United Kingdom}
\author{Soojoon Lee}
\affiliation{
Department of Mathematics and Research Institute for Basic Sciences,
Kyung Hee University, Seoul 02447, Korea
}
\affiliation{
School of Mathematical Sciences and Centre for the Mathematics
and Theoretical Physics of Quantum Non-Equilibrium Systems,
University of Nottingham, University Park, Nottingham NG7 2RD, United Kingdom
}

\pacs{
03.67.Hk, 
89.70.Cf, 
03.67.Mn  
}
\date{\today}

\begin{abstract}
We propose a hybrid quantum algorithm based on the Harrow-Hassidim-Lloyd (HHL) algorithm for solving a system of linear equations. In our hybrid scheme, a classical information feed-forward is required from the quantum phase estimation algorithm to reduce a circuit depth from the original HHL algorithm. 
In this paper,
we show that this hybrid algorithm is functionally identical to the HHL algorithm
under the assumption that the number of qubits used in algorithms is large enough.
In addition,
it is experimentally examined with four qubits in the IBM Quantum Experience setups,
and the experimental results of our algorithm
show higher accurate performance on specific systems of linear equations
than that of the HHL algorithm.
\end{abstract}

\maketitle
\section{Introduction}
A quantum computer is a physical machine based on quantum physics.
Since the Shor's algorithm was known to be a method
for factoring a very large number with exponential speed-up on a quantum computer~\cite{S97},
various quantum algorithms have been theoretically introduced under the assumption of noiseless quantum computers. However, the performance of quantum algorithms in practice suffers from physical errors in noisy quantum devices under technical limitations (e.g., decoherence).
Thus,
it is of great importance to find more efficient and error-robust methods
for existing quantum algorithms within physical error thresholds for near-term future applications.

The Harrow-Hassidim-Lloyd (HHL) algorithm~\cite{HHL09} is a well-known and quantum algorithm
for finding the solution $\vec{x}$ of a given system of linear equations
represented by an input matrix $\hat{A}$ and a vector $\vec{b}$.
Intuitively,
the HHL algorithm performs the inverse of the matrix $\hat{A}$ on the vector $\vec{b}$ in a heralded way and is more efficiently operated with sparse matrix $\hat{A}$.
Because the HHL algorithm demonstrates how to use quantum computers for fundamental mathematical problems, it provides important impact on other quantum applications in other quantum applications such as the quantum machine learning algorithm~\cite{LGZ16}
and the high-order quantum algorithm~\cite{B14} for solving differential equations.

The purpose of this paper is to provide a modified version
of the \emph{original} HHL algorithm~\cite{HHL09} to be efficiently operated on both classical and quantum computers in sequential steps.
The main idea of our hybrid algorithm is
to remove an unnecessary quantum part of the original HHL algorithm
with prior classical information, so we call it the \emph{hybrid} HHL algorithm.
This makes the shortened circuit depth of the original algorithm 
without losing quantum advantages dependent to the original algorithm.
We also demonstrate the hybrid HHL algorithm compared with the original one
with different eigenvalues of $\hat{A}$ in the IBM Quantum eXperience (IBMQX) setups,
and show that our hybrid algorithm has more enhanced performance than the other.

This paper is organized as follows.
In Sec.~\ref{sec:Prel}
we introduce the definitions and details of the original HHL algorithm theoretically.
In Sec.~\ref{sec:hybridHHL},
we describe the hybrid HHL algorithm coped with our specific linear system. 
In Sec.~\ref{sec:CIaE},
we verify that the hybrid algorithm has reduced the effect of the errors from the HHL algorithm tested on the IBMQX setups.
Finally, in Sec.~\ref{sec:Conclusion}, we make a summary of our results and a further discussion.

\section{Preliminaries} \label{sec:Prel}

\subsection{Definitions}
A general form of linear systems of equations is given in
\begin{equation} \label{eq:linear_system_of_equations}
\hat{A}\vec{x}=\vec{b},
\end{equation}
where $\hat{A}$ is a matrix and $\vec{b}$ is a vector.
Throughout this paper,
it is assumed that the matrix $\hat{A}$ is Hermitian and the vector $\vec{b}$ is unit.
Then the matrix $\hat{A}$ has a spectral decomposition~\cite{W13}
\begin{equation} \label{eq:spectral_decomp}
\hat{A}=\sum_{j=1}^{l}\lambda_j\ket{u_j}\bra{u_j},
\end{equation}
where
$\lambda_j$ is an eigenvalue of $\hat{A}$ corresponding to the eigenstate $\ket{u_j}$.
From this decomposition,
a unitary operator $U_{\hat{A}}$ 
is defined as follows:
\begin{equation} \label{eq:unitary_from_A}
U_{\hat{A}}=e^{2\pi i \hat{A}}=\sum_{j=1}^{l} e^{2\pi i\lambda_j}\ket{u_j}\bra{u_j}.
\end{equation}
It is easy to see that
for any non-zero eigenvalue $\lambda_j$ of $\hat{A}$
there exists $\lambda'_j\in(0,1)$ such that $e^{2\pi i\lambda'_j}=e^{2\pi i\lambda_j}$.
Thus,
for convenience,
we may assume that the eigenvalues of $\hat{A}$ are in $(0,1)$.

We then introduce three definitions to explain the main idea of this work.
\begin{Def} \label{def:nBE}
Let $\lambda$ be a positive real number with the range of $(0,1)$,
then its \emph{binary representation} is given by
\begin{equation}
\lambda=0.b_1b_2b_3\cdots,
\end{equation} 
where $b_k\in\{0,1\}$ is a $k$-th bit of the binary representation.
For $n\in\mathbb{N}$,
the \emph{$n$-binary estimation of $\lambda$},
say $\lambda(n)$,
is given by
\begin{equation}
\lambda(n):=b_1b_2b_3\cdots b_n \approx 2^{n} \lambda.
\label{tilde_lambda}
\end{equation}
\end{Def}

\begin{Def} \label{def:eigenmean}
Let $\{\lambda_j\}_{j=1}^{l}$ be the set of all non-zero eigenvalues
of a Hermitian matrix $\hat{A}$.
For $k\in\mathbb{N}$,
define a constant $\bar{m}_k$ as
\begin{equation}
\bar{m}_k:={1 \over l} \left(\sum_{j=1}^{l}b_k^j\right),
\end{equation}
where $b_k^j$ is the $k$-th bit of the binary representation of $\lambda_j$.
We call $\bar{m}_k$ the \emph{$k$-th eigenmean of $\hat{A}$}.
Moreover,
if $\bar{m}_k=0$ or $1$, $\bar{m}_k$ is called \emph{fixed}.
\end{Def}

In Definition~\ref{def:eigenmean},
we remark that
if the $k$-th eigenmean of $\hat{A}$ is fixed
then every $k$-th bits of the binary representations of $\{\lambda_j\}_{j=1}^{l}$
is equal,
that is,
if $\bar{m}_k$ is fixed then $b_k^{j_1}=b_k^{j_2}$ for any $1\le j_1,j_2\le l$.

\begin{Def}
Let $\lambda$ be an eigenvalue of a Hermitian matrix $\hat{A}$ and let $n\in\mathbb{N}$.

(i) $\lambda$ is called \emph{perfectly $n$-estimated},
if $\lambda$ satisfies $2^n\lambda =\lambda(n)$,
where $\lambda(n)$ is the $n$-binary estimation of $\lambda$ in Definition~\ref{def:nBE}.

(ii) The matrix $\hat{A}$ is called \emph{perfectly $n$-estimated},
if all the eigenvalues of $\hat{A}$ are perfectly $n$-estimated.
\end{Def}

\subsection{HHL algorithm} \label{sec:HHL}
For a given $\hat{A}\vec{x}=\vec{b}$,
the HHL algorithm~\cite{HHL09} was devised
to figure out the approximation of the expectation value $\vec{x}^{\dagger}M\vec{x}$
for some operator $M$.
In the algorithm,
$\vec{b}$ is represented as a quantum state
$\ket{b}_V=\sum_{j=1}^{l}\alpha_j\ket{u_j}_V$,
where $\ket{u_j}_V$ is an eigenstate of $\hat{A}$
and $\alpha_j\in\mathbb{C}$ such that $\sum_{j=1}^{l}|\alpha_j|^2=1$,
and the solution is given as a quantum state 
\begin{equation} \label{eq:normalized_sol}
\ket{x}_V=\frac{{\hat{A}}^{-1}\ket{b}_V}{\|{\hat{A}}^{-1}\ket{b}_V\|},
\end{equation}
where ${\hat{A}}^{-1}$ is the inverse matrix of $\hat{A}$.
As shown in Fig.~\ref{fig:algorithm_HHL},
the HHL algorithm with $n$-qubit register consists of three main parts:
the quantum phase estimation (QPE) algorithm~\cite{K95,MP95,DBE98,HMPEPC97}
without the final measurement part (we call it as QPE part),
the ancilla quantum encoding (AQE) part,
in which the ancillary qubit $A$ conditionally operates on the state of the register qubits,
and the inverse QPE part.

\begin{figure}
\centering
\includegraphics[width=.6\linewidth,trim=0cm 0cm 0cm 0cm]{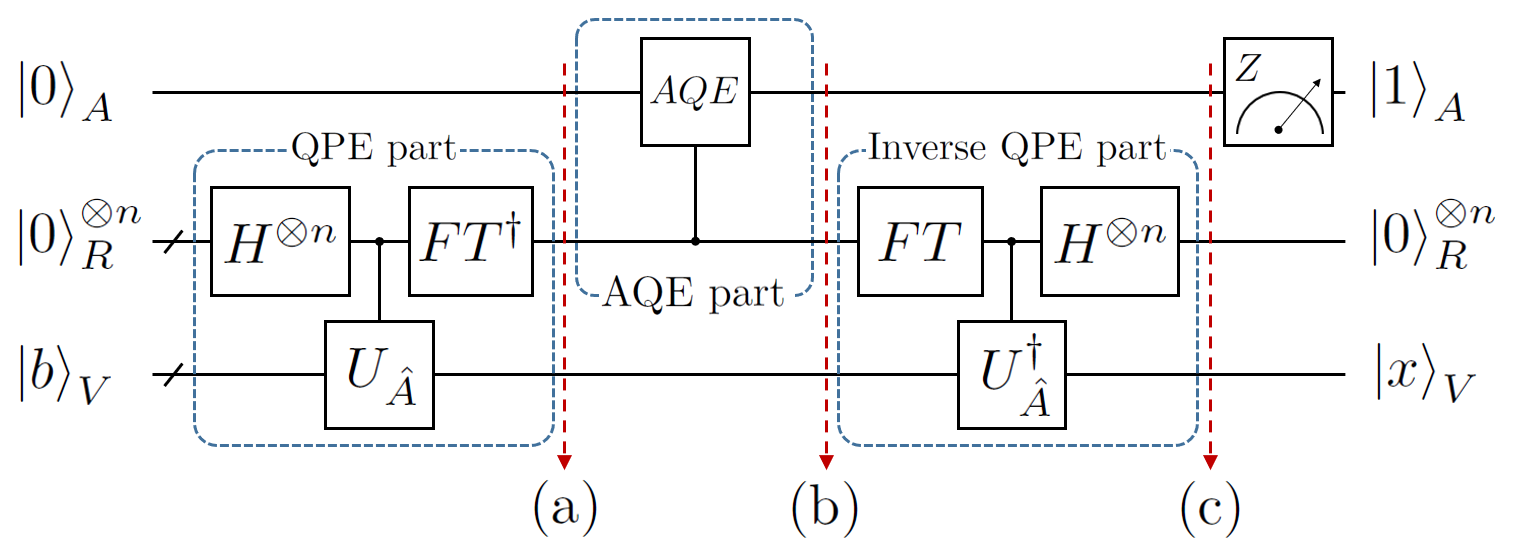}
\caption{
The circuit diagram for the HHL algorithm~\cite{HHL09}:
the circuit consists of the QPE part,
the AQE part
and the inverse QPE part.
The unitary gate $U_{\hat{A}}=e^{2\pi i \hat{A}}$ is used for controlled-unitary gates
between the register $R$ and the input qubit $V$
while the controlled $AQE$ indicates a set of controlled gates
between the register $R$ and the ancillary qubit $A$.
}
\label{fig:algorithm_HHL}
\end{figure}

We first describe the QPE part of the HHL algorithm and assume that the initial state 
is prepared in $\ket{0}_A\otimes\ket{0}_R^{\otimes n}\otimes\ket{b}_V$ with a $n$-qubit register.
After finishing the QPE part,
the state at step (a) in Fig.~\ref{fig:algorithm_HHL} is written
by the superposition of the state (see details in Eq.~(\ref{1st_QPEA}))
with index $j$ and the ancillary qubit $\ket{0}_A$ such that
\begin{equation} \label{eq:after_PE_part}
\ket{0}_A\otimes\sum_{j=1}^{l}\sum_{x=0}^{2^n-1}\alpha_j\beta_{x|j}\ket{x}_R\otimes\ket{u_j}_V,
\end{equation}
where $\beta_{x|j}=\frac{1}{2^n}\sum_{y=0}^{2^n-1} e^{2\pi i y\left(\lambda_j-{x}/{2^n}\right)}$.
Then
the estimated value $x$ in Eq.~(\ref{eq:after_PE_part}) can be relabeled
with ${\lambda_x}=x/2^n$ such as
\begin{equation} \label{eq:after_PE_part_relabel}
\ket{0}_A
\otimes\sum_{j=1}^{l}\sum_{x=0}^{2^n-1}\alpha_j\beta_{x|j}\ket{{\lambda_x}}_R
\otimes\ket{u_j}_V.
\end{equation}

In the AQE part,
a quantum encoding operation about the ancillary qubit $A$ is performed,
and the the controlled $AQE$ in Fig.~\ref{fig:algorithm_HHL} is given by
\begin{equation} \label{eq:n_reg_ancillary_quantum_encoding_mapping}
\ket{0}_A
\otimes\ket{{\lambda_x}}_R
\mapsto
\left(\sqrt{1-\frac{c^2}{{\lambda_x}^2}}\ket{0}_A+\frac{c}{{\lambda_x}}\ket{1}_A\right)
\otimes\ket{{\lambda_x}}_R,
\end{equation}
where $c=1/\|{\hat{A}}^{-1}\ket{b}\|$.
In practice,
the value $c$ in Eq.~(\ref{eq:n_reg_ancillary_quantum_encoding_mapping})
has to be chosen with $O(1/\kappa)$ as in~\cite{HHL09},
where $\kappa$ is called the condition number of $\hat{A}$.
Then
the state at step (b) in Fig.~\ref{fig:algorithm_HHL} is equal to
\begin{equation} \label{eq:AQE_state}
\sum_{j=1}^{l}\sum_{x=0}^{2^n-1}
\left(\sqrt{1-\frac{c^2}{{{\lambda_x}}^2}}\ket{0}_A+\frac{c}{{\lambda_x}}\ket{1}_A\right)
\otimes\alpha_j\beta_{x|j}\ket{{\lambda_x}}_R\otimes\ket{u_j}_V.
\end{equation}

If all the eigenvalues $\lambda_j$ are perfectly $n$-estimated
then $\beta_{x|j} = \delta_{x,2^n\lambda_j}$,
and the state in Eq.~(\ref{eq:AQE_state}) becomes
\begin{equation}
\sum_{j=1}^{l}
\left(\sqrt{1-\frac{c^2}{{\lambda_j}^2}}\ket{0}_A+\frac{c}{\lambda_j}\ket{1}_A\right)
\otimes\alpha_j\ket{\lambda_j}_R
\otimes\ket{u_j}_V.
\end{equation}
Then after performing the inverse QPE part,
the state at step (c) in Fig.~\ref{fig:algorithm_HHL} is represented as
\begin{equation} \label{eq:after_IQPE}
\sum_{j=1}^{l}
\left(\sqrt{1-\frac{c^2}{{\lambda_j}^2}}\ket{0}_A+\frac{c}{\lambda_j}\ket{1}_A\right)
\otimes\ket{0}_R^{\otimes n}
\otimes\alpha_j\ket{u_j}_V,
\end{equation}
in which all the register qubits are reseted in $\ket{0}_R^{\otimes n}$.
The normalized solution of the linear equation appears
when the measurement of the ancillary qubit $A$ is performed in $Z$-axis.
In other words,
if the outcome state of $A$ is $\ket{1}_A$,
the state describing the qubit system $V$ successfully represents
the solution of the linear equation as follows:
\begin{equation} \label{eq:algorithm_HHL_final_state}
\frac{1}{\|{\hat{A}}^{-1}\ket{b}\|}\sum_{j=1}^{l}\frac{\alpha_j}{\lambda_j}\ket{u_j}_V,
\end{equation}
where $\|{\hat{A}}^{-1}\ket{b}\|^2=\sum_{j=1}^{l} |\alpha_j|^2/\lambda_j^2$.
On the other hand,
if there exists an eigenvalue of $\hat{A}$ which is not perfectly $n$-estimated,
then the total state of Eq.~(\ref{eq:after_IQPE}) becomes a pure entangled state.
In this case,
the pure state in Eq.~(\ref{eq:algorithm_HHL_final_state}) turns into
a mixed state which is not the solution of the linear equation.

\section{Hybrid HHL algorithm} \label{sec:hybridHHL}
\begin{figure}[t]
\begin{minipage}[b]{.49\linewidth}
\includegraphics[width=\linewidth,trim=0cm 0cm 0cm 0cm]{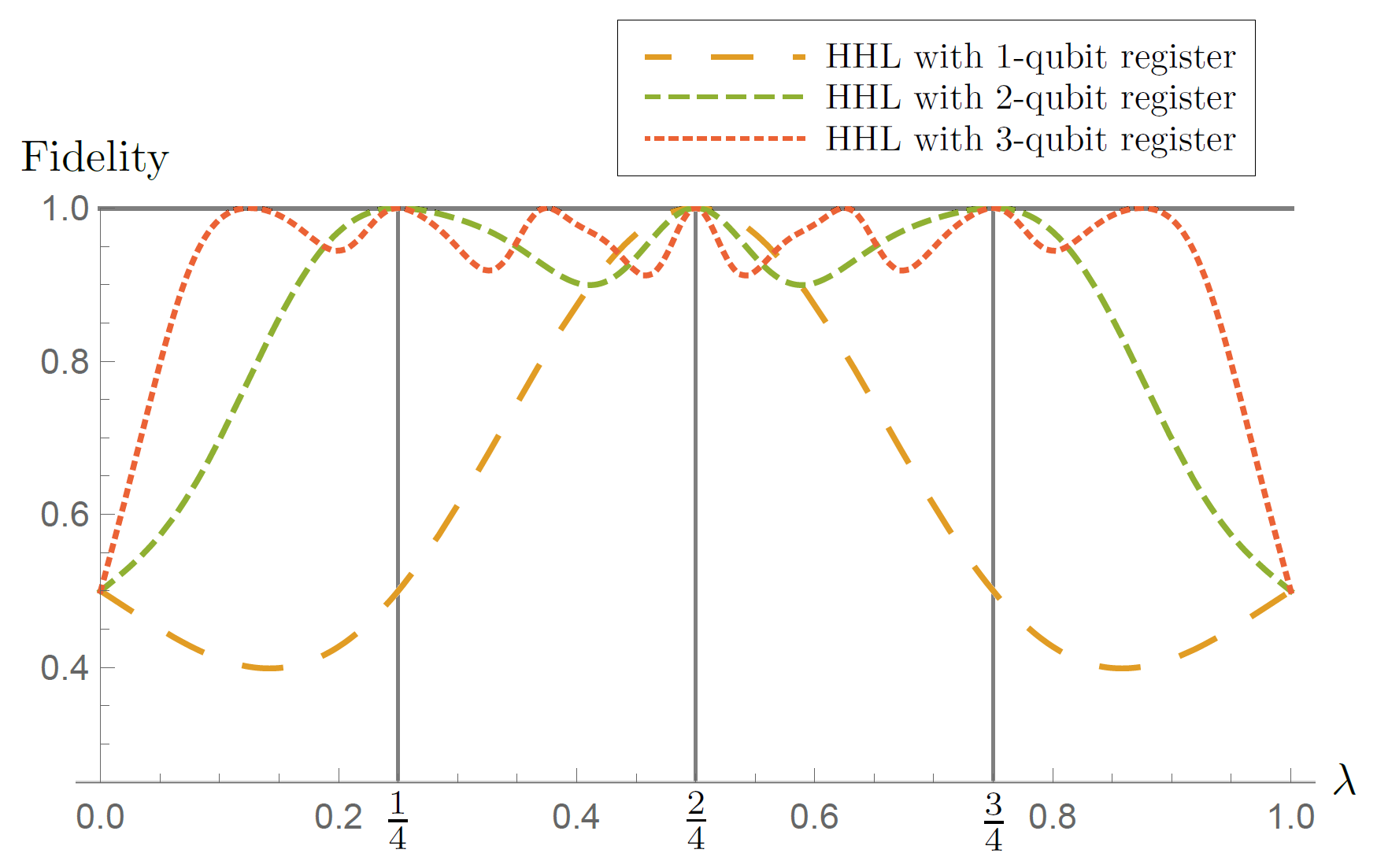}
\center{(a)}
\end{minipage}
\begin{minipage}[b]{.49\linewidth}
\includegraphics[width=\linewidth,trim=0cm 0cm 0cm 0cm]{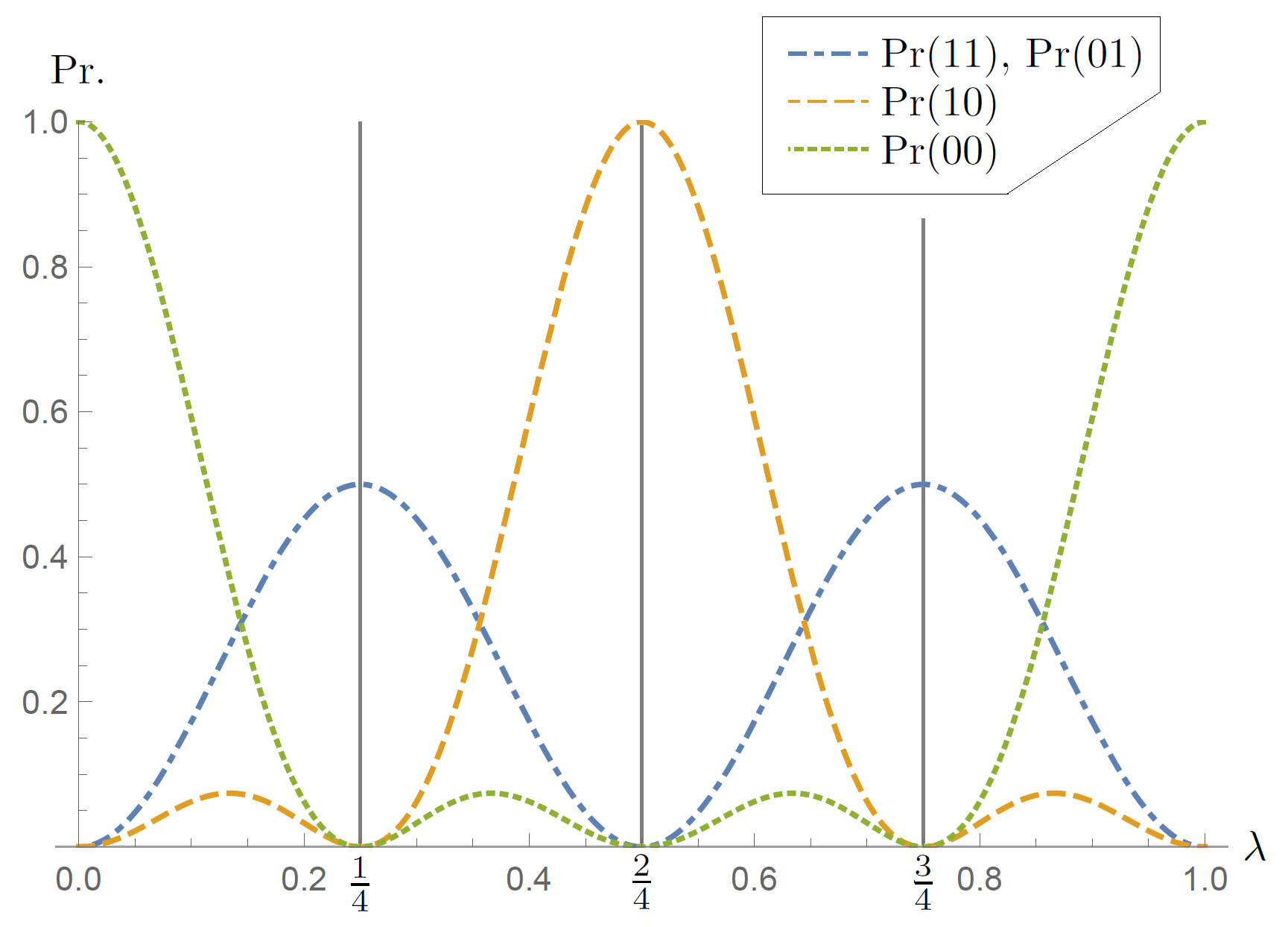}
\center{(b)}
\end{minipage}
\caption{
(a) Fidelities between solutions of the linear systems of equations in Eq.~(\ref{eq:specific_form2})
and output states obtained from the HHL algorithm with $k$-qubit register for $k=1,2,3$. (b)
The probability distribution for measurement outcomes:
The QPEA with 2-qubit register is performed on ${\hat{A}}_\lambda$ and $\vec{b}$
in Eq.~(\ref{eq:specific_form2}).
Since the QPEA makes use of 2 qubits as register,
its measurement outcomes are two-bit strings $b_1b_2$ with $b_1,b_2\in\{0,1\}$,
and $\Pr(b_1b_2)$ denotes the probability that the outcome is $b_1b_2$.
Details of the probabilities are presented in Eqs.~(\ref{Prob1}) and (\ref{Prob2}).
}
\label{fig:graph_fidelity}
\end{figure}
\subsection{Motivation: specific linear equations}
For $0<\lambda<1$,
let us now consider the following linear system of equations
\begin{equation}  \label{eq:specific_form}
{\hat{A}}_\lambda\vec{x}=\vec{b},
\end{equation}
where
\begin{equation} \label{eq:specific_form2}
{\hat{A}}_\lambda=
\left(
\begin{array}{cc}
\frac{1}{2} & \lambda-\frac{1}{2} \\
\lambda-\frac{1}{2} & \frac{1}{2} \\
\end{array}
\right),
\quad
\vec{b}=
\left(
\begin{array}{c}
1 \\ 0 \\ 
\end{array}
\right)=\ket{0}.
\end{equation}
Then one can readily check that ${\hat{A}}_\lambda$ is Hermitian,
and can obtain the solution of the equation ${\hat{A}}_\lambda\vec{x}=\vec{b}$
which is given by
\begin{equation} \label{eq:theoretical_sol}
\vec{x}
=\frac{1}{\sqrt{2}\lambda}\ket{+}+\frac{1}{\sqrt{2}(1-\lambda)}\ket{-},
\end{equation}
where 
$\ket{\pm}=(\ket{0}\pm\ket{1})/\sqrt{2}$.

From the original HHL algorithm with ${\hat{A}}_\lambda$ and $\vec{b}$ in Eq.~(\ref{eq:specific_form2}), we can obtain the fidelity~\cite{W13} between the results of the algorithm with a $k$-qubit register
($k=1,2,3$) and analytical results given by 
\begin{equation}
F_k(\lambda)\equiv F(\rho_{k,\lambda},\psi_\lambda),
\end{equation}
where $F(\cdot,\cdot)$ indicates the quantum fidelity,
the final state $\rho_{k,\lambda}$ is the solution state describing the qubit system $V$
obtained at the end of algorithm performance,
and $\psi_\lambda$ is the normalized solution of $\vec{x}$ in Eq.~(\ref{eq:theoretical_sol}).
In Fig.~\ref{fig:graph_fidelity} (a),
the fidelities $F_k(\lambda)$ are presented with $k=1,2,3$
(details in Appendix~\ref{app:Fidelities}).
It indicates that more register qubits make a better and larger window
for higher fidelity between outcome states and the analytical solutions. 
From the curves,
we gain two features on the performance of the original HHL algorithm
which have not appeared in any previous literature.
The first feature is that
we can find an exact solution of the linear system of equations
only when the matrix ${\hat{A}}_\lambda$ is perfectly $n$-estimated.
In particular,
note that the fidelities reach to 1 with both 2- and 3-qubit registers
for $\lambda = 1/4,\,1/2,\, 3/4$.
In other words,
additional register qubits can increase the fidelity
if ${\hat{A}}_\lambda$ is not perfectly $n$-estimated.
The second one is that in the HHL algorithm
the use of a smaller size of register provides more precise solutions
at neighborhoods of the perfectly $n$-estimated eigenvalues.
For example,
we can verify that $F_3 <F_2 <F_1$ for $\lambda = 0.475$ in Fig.~\ref{fig:graph_fidelity} (a).

From the fact that $F_2(0.5)=F_3(0.5)=1$,
one may think that, for some restriction of $\lambda$,
circuit implementations for the HHL algorithm with 3-qubit register can be simplified
by using 2 qubits as register of the algorithm.
For example, the algorithm may be implemented by using a smaller number of gates,
and could have more efficient performance with the reduced amount of errors.
The idea motivates us to devise a quantum linear equation algorithm
whose circuit implementation is more simplified than that of the original HHL algorithm.

\subsection{Description of hybrid HHL algorithm}

We here present the hybrid HHL algorithm, which mainly consists of the blocks of the quantum phase estimation algorithm (QPEA), classical computing, and a reduced HHL algorithm to test the original and hybrid HHL algorithms with a two-qubit register as described in Fig.~\ref{fig:circuit_two_algorithm}. In particular, the third part of the hybrid algorithm is called the \emph{reduced HHL part} because the part is not an independent quantum algorithm.

\begin{figure}
\centering
\includegraphics[width=.6\linewidth,trim=0cm 0cm 0cm 0cm]{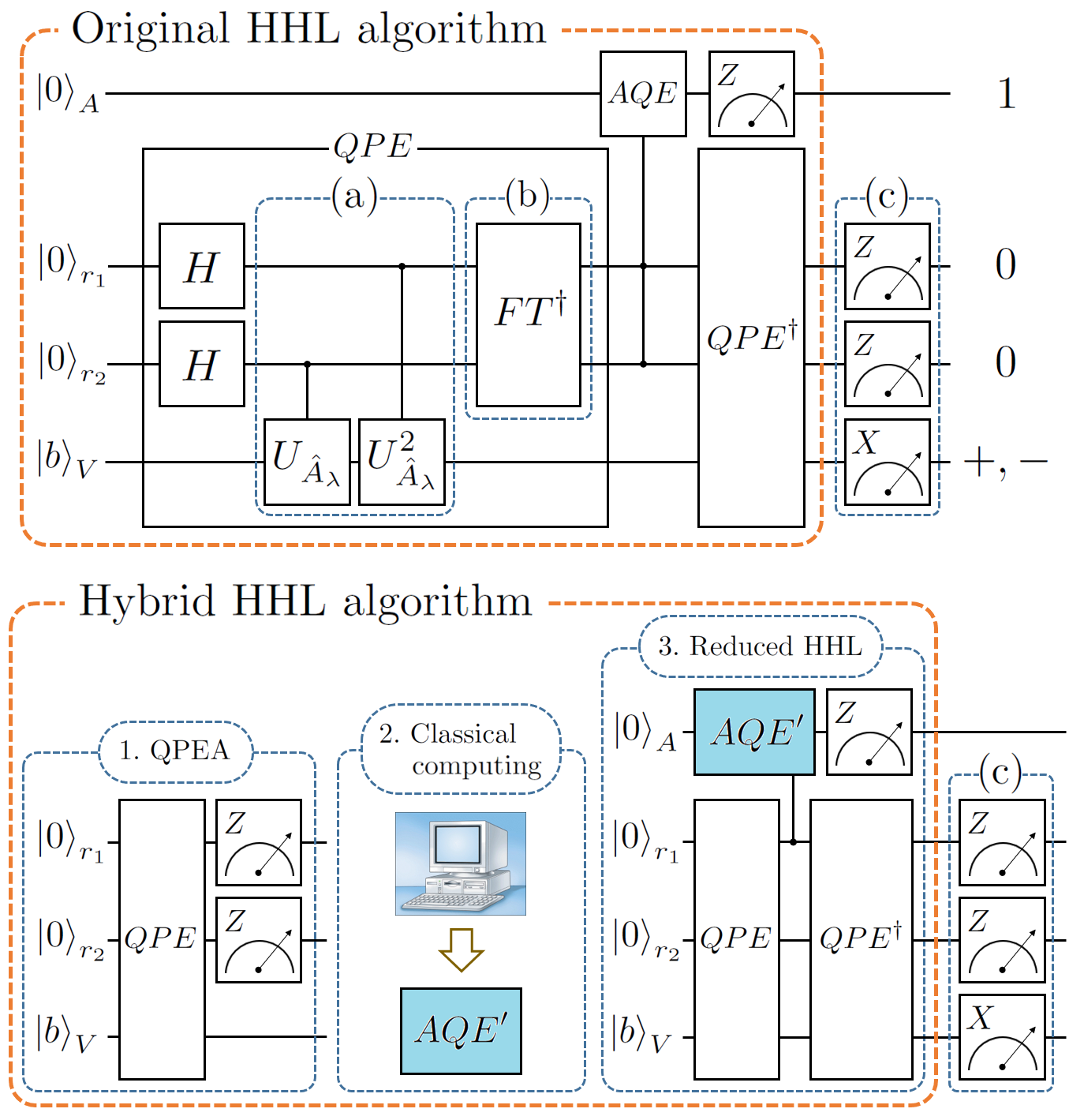}
\caption{
The circuit diagrams of the original and hybrid HHL algorithms
for ${\hat{A}}_\lambda$ and $\vec{b}$ in Eq.~(\ref{eq:specific_form2}).
(a) The controlled $U_\lambda^m$ gate,
where $U_{{\hat{A}_\lambda}}^m=(e^{2\pi i {\hat{A}}_\lambda})^m$ for $m\in\mathbb{Z}$.
(b) The inverse quantum Fourier transform for two qubits.
(c) Additional measurement devices to check outputs of the algorithms.
In the hybrid HHL algorithm,
$AQE'$ indicates a reduced AQE part (aqua color).
The detailed circuit implementations of (a), (b), and $AQE'$ are given in Fig.~\ref{fig:circuit_implementation}.
}
\label{fig:circuit_two_algorithm}
\end{figure}

\begin{itemize}
\item \textbf{QPEA:}
Repeatedly perform the QPEA to obtain $k$-bit classical information of eigenvalues with ${\hat{A}}_\lambda$ and $\ket{b}$.

\item \textbf{Classical Computing:}
Analyze measurement outcomes from the first step
by means of classical computers.
Based on the analyzed data,
such as an estimation of the probability distribution in Fig.~\ref{fig:graph_fidelity} (b),
one determines which simpler circuit implementation of the original AQE part,
called the \emph{reduced AQE part},
is applicable.
The circuit of the reduced AQE part is implemented by the classical analysis.

\item \textbf{Reduced HHL:} 
Perform 
the HHL algorithm with the reduced AQE part instead of the original AQE part.
\end{itemize}

Importantly, if the reduced AQE part is not applicable from the second step of the hybrid algorithm due to the lack of capability to distinguish different eigenvalues, the user of the algorithm should restart the first step with more register qubits to perform the reduced HHL part.

\subsection{How does hybrid HHL algorithm work?}
Let us consider that our hybrid HHL algorithm is applied
to the linear equation ${\hat{A}}_\lambda\vec{x}=\vec{b}$ in Eq.~(\ref{eq:specific_form})
when $\lambda=1/4,2/4,3/4$,
and that we use only 2 qubits as register of the hybrid HHL algorithm.
Assume that $\lambda=1/4,2/4,3/4$ is unknown.

In the first step of the hybrid HHL algorithm,
the QPEA with 2-qubit register is repeatedly performed
on the matrix ${\hat{A}}_\lambda$ and the state $\vec{b}$.
Then as depicted in Fig.~\ref{fig:graph_fidelity} (b),
we may obtain a probability distribution for the measurement outcomes of the QPEA given by
\begin{eqnarray}
\Pr(j0)
&=&
\frac{1}{16}e^{-6\pi i\lambda} ( 1+e^{4\pi i\lambda})^2 \left( (-1)^j + e^{2\pi i\lambda} \right)^2, 
\label{Prob1}\\
\Pr(01)
&=&
\Pr(11)
=-\frac{1}{8}e^{-4\pi i\lambda}\left( -1+e^{4\pi i\lambda}\right)^2,
\label{Prob2}
\end{eqnarray}
where $j=0,1$.

In the second step of the algorithm,
we can know what the eigenvalues of ${\hat{A}}_\lambda$ are
from the probability distribution in Fig.~\ref{fig:graph_fidelity} (b).
In addition,
we can also know that ${\hat{A}}_\lambda$ is perfectly $2$-estimated,
$\bar{m}_2$ $(=1)$ is a fixed eigenmean
for the matrices ${\hat{A}}_{1/4}$ and ${\hat{A}}_{3/4}$,
and the matrix ${\hat{A}}_{2/4}$ has fixed eigenmeans
$\bar{m}_1$ $(=1)$ and $\bar{m}_2$ $(=0)$.
From the classical information,
the AQE part of the HHL algorithm can be simply implemented.
In detail,
the AQE parts for the matrices ${\hat{A}}_{1/4}$ and ${\hat{A}}_{3/4}$
are given by a controlled-unitary operation
\begin{eqnarray} \label{eq:AQE_mapping_1_over_4}
\ket{0}_A\ket{0}_{r_1} &\mapsto&
\left(\sqrt{1-c^2}\ket{0}_A+c\ket{1}_A\right)\ket{0}_{r_1} \nonumber \\
\ket{0}_A\ket{1}_{r_1} &\mapsto&
\left(\sqrt{1-\left(\frac{c}{3}\right)^2}\ket{0}_A +\frac{c}{3}\ket{1}_A\right)\ket{1}_{r_1},
\end{eqnarray}
and the AQE for the matrix ${\hat{A}}_{2/4}$ is given by a single-qubit unitary operation
\begin{equation} \label{eq:AQE_mapping_2_over_4}
\ket{0}_A \mapsto\sqrt{1-\left(\frac{c}{2}\right)^2}\ket{0}_A+\frac{c}{2}\ket{1}_A.
\end{equation}

One of the practical drawbacks in the HHL algorithm is that we anyway need to know some partial information of the matrix $\hat{A}$ to setup the value $c$ in Eq.~(\ref{eq:n_reg_ancillary_quantum_encoding_mapping})
in the physical circuit of the AQE. Our main purpose is to extract this information with QPEA, and then the approximated value $c$ is now known for AQE and reduced AQE.
To implement the original and reduced AQE parts, we employ a specific conditional phase gate $R_y(\theta_n)$, where $R_y(\cdot)$ is the rotation gate about the $\hat{y}$ axis
and $\theta_n:=2\arccos(\sqrt{1-{{c_\lambda}^2}/{n^2}})$ 
with ${c_\lambda} = \parallel {{\hat{A}}_\lambda}^{-1} \ket{b} \parallel^{-1}$ for $n\ge1$.

In the third step,
by performing the reduced HHL part on the linear equation ${\hat{A}}_\lambda\vec{x}=\vec{b}$,
whose reduced AQE part is reconstructed
based on Eqs.~(\ref{eq:AQE_mapping_1_over_4}) and~(\ref{eq:AQE_mapping_2_over_4}),
we can obtain the normalized solution of the linear equation in the qubit system $V$.

In these examples,
our hybrid algorithm solves the linear equation under the condition that 
the matrix $\hat{A}_\lambda$ is perfectly $2$-estimated and it has fixed eigenmeans.
In fact,
this condition is indispensable for reducing the AQE part of the original HHL algorithm.
More generally,
the following theorem shows that
if a matrix $\hat{A}$ in Eq.~(\ref{eq:linear_system_of_equations}) is perfectly $n$-estimated,
and it has fixed eigenmeans,
then we can implement the AQE part by using smaller size of register
when the eigenvalues are known as follows.

\begin{Thm} \label{thm:reduced_AQE_part}
Let $n,k\in\mathbb{N}$ with $k\le n$.
If a matrix $\hat{A}$ is perfectly $n$-estimated,
and the matrix $\hat{A}$ has $k$ fixed eigenmeans,
then the AQE part can be implemented by $(n-k)$-qubit register.
\end{Thm}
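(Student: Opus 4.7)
The plan is to construct the reduced AQE explicitly from the two hypotheses: perfect $n$-estimation forces the register to become a deterministic function of the eigenvalue index after the QPE block, and the $k$ fixed eigenmeans then collapse $k$ of those register qubits into classically known constants that can be stripped from the controls of the AQE rotation.

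First I would observe that perfect $n$-estimation collapses the QPE amplitudes to $\beta_{x|j} = \delta_{x, 2^n \lambda_j}$, as already noted in Sec.~\ref{sec:HHL}. Hence at step (a) of Fig.~\ref{fig:algorithm_HHL} the register-input state reads $\ket{0}_A \otimes \sum_{j=1}^{l} \alpha_j \ket{b_1^j b_2^j \cdots b_n^j}_R \otimes \ket{u_j}_V$, so each register qubit $r_i$ literally stores the $i$-th binary bit of $\lambda_j$. Using the remark after Definition~\ref{def:eigenmean}, let $I_f \subseteq \{1,\ldots,n\}$ be the $k$ indices at which $\bar{m}_i$ is fixed and $b_i^{*} \in \{0,1\}$ the common bit value there; on those positions the register factorizes as $\bigotimes_{i \in I_f} \ket{b_i^{*}}_{r_i}$, independent of $j$.

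With $I_v = \{1,\ldots,n\} \setminus I_f$ of size $n-k$, I would split the eigenvalue as $\lambda_j = C + V_j$, where $C = \sum_{i \in I_f} b_i^{*} 2^{-i}$ is a constant known from the classical post-processing step of the hybrid algorithm and $V_j = \sum_{i \in I_v} b_i^j 2^{-i}$ depends only on the $n-k$ non-fixed bits. The AQE target angle $\theta_{\lambda_j} = 2 \arccos \sqrt{1 - c^2 / \lambda_j^2}$ is therefore a function of the $n-k$ bits $\{b_i^j\}_{i \in I_v}$ alone, so the AQE can be realized as a uniformly controlled $R_y$ rotation on the ancilla whose $n-k$ controls are exactly $\{r_i\}_{i \in I_v}$: for each pattern $v \in \{0,1\}^{n-k}$ one applies $R_y(\theta_{C+v})$ conditioned on those bits, leaving the $k$ fixed qubits $\{r_i\}_{i \in I_f}$ untouched. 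This realizes the AQE block using only $n-k$ register qubits.

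The main obstacle is conceptual rather than technical: one must pin down what ``implemented by $(n-k)$-qubit register'' means, since the $n$-qubit register is still needed in the surrounding QPE and inverse QPE stages to produce and uncompute the exact binary encoding, whereas the theorem concerns only the AQE subcircuit. Once that interpretation is fixed, the remaining work is a short verification that the $I_v$-restricted controlled rotation reproduces the original AQE action on every basis state $\ket{b_1^j b_2^j \cdots b_n^j}_R$, which follows immediately from the splitting $\lambda_j = C + V_j$ together with the fact that the fixed bits are frozen at $b_i^{*}$.
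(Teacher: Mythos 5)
Your proposal is correct and follows essentially the same route as the paper: perfect $n$-estimation makes the register a deterministic $n$-bit encoding of each eigenvalue, the $k$ fixed eigenmeans factor out as a constant product state on those qubits, and the AQE rotation then depends only on the remaining $n-k$ bits plus a classically known offset (your $C+V_j$ is the paper's $y'+y$). The only cosmetic difference is that the paper reorders the fixed positions to the front ``without loss of generality'' where you keep a general index set $I_f$, and your closing remark about what ``implemented by an $(n-k)$-qubit register'' means is a fair observation the paper leaves implicit.
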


\begin{proof}
Since each eigenvalue $\lambda_j$ of $\hat{A}$ is perfectly $n$-estimated,
its binary representation can be expressed as
\begin{equation}
\lambda_j=0.b_{1}^{j}b_{2}^{j}\cdots b_{n}^{j}
\end{equation}
for some $b_{i}^{j}\in\{0,1\}$.
Then since $\beta_{x|j}$ in Eq.~(\ref{eq:AQE_state}) becomes
\begin{equation}
\beta_{x|j}=\frac{1}{2^n}\sum_{y=0}^{2^n-1} e^{2\pi i y\left(2^n\lambda_j-{x}\right)/{2^n}}=1
\end{equation}
if $x=2^n\lambda_j$, and $\beta_{x|j}=0$ otherwise,
the state in Eq.~(\ref{eq:after_PE_part}) must be
\begin{equation} \label{eq:state_condition1}
\ket{0}_A\otimes\sum_{j=1}^{l}\alpha_j\ket{b_{1}^{j}b_{2}^{j}\cdots b_{n}^{j}}_R\otimes\ket{u_j}_V.
\end{equation}

Since the positions of the fixed eigenmeans of $\hat{A}$ do not affect this process,
without loss of generality, we may assume that
the $k$ fixed eigenmeans of the matrix $\hat{A}$
are $\bar{m}_1,\cdots,\bar{m}_k$.
Then the state in Eq.~(\ref{eq:state_condition1}) becomes
\begin{equation} \label{eq:state_condition2_k_less_n}
\ket{0}_A
\otimes\ket{\bar{m}_1\cdots\bar{m}_k}_{r_1\cdots r_k}
\otimes\sum_{j=1}^{l}\alpha_j\ket{b_{k+1}^{j}\cdots b_{n}^{j}}_{r_{k+1}\cdots r_n}
\otimes\ket{u_j}_V,
\end{equation}
since $\bar{m}_i=b_i^j$ holds for all $j$ and $1\le i\le k$.
Thus the AQE part can be implemented
by using $(n-k)$-qubit register as follows:
\begin{equation} \label{eq:n_mius_k_reg_ancillary_quantum_encoding_mapping}
\ket{0}_A\ket{y}_{r_{k+1}\cdots r_n}
\mapsto
\left(\sqrt{1-\frac{c^2}{(y'+y)^2}}\ket{0}_A+\frac{c}{y'+y}\ket{1}_A\right)
\ket{y}_{r_{k+1}\cdots r_n},
\end{equation}
where $0\le y\le2^{n-k}-1$ and $y'=\sum_{i=1}^{k}2^{(n-i)}\bar{m}_i$.
\end{proof}

Remark that Theorem~\ref{thm:reduced_AQE_part}
is useful for our hybrid algorithm as follows.
First of all,
the eigenvalues of $\hat{A}$ can be perfectly $n$-estimated
when a sufficiently large number of qubits are used as register of the HHL algorithm.
Secondly,
since the HHL algorithm deals with positive semidefinite matrices
whose eigenvalues are between 0 and 1 in our case,
the matrix $\hat{A}$ can have at least a fixed eigenmean.
Thus, by Theorem~\ref{thm:reduced_AQE_part}, 
the AQE part can be implemented with the reduced number of qubit register, 
depending on the number of fixed eigenmeans. 
In our hybrid algorithm 
we can correctly guess the eigenvalues of $\hat{A}$ with high probability
by means of the QPEA in advance of the HHL algorithm, 
and hence we can in practice implement the reduced AQE part.

\section{Circuit implementation and experiment} \label{sec:CIaE}

\subsection{IBM Quantum Experience}
The IBMQX is the name of online facilities for general public
who can test their own experimental protocols in five (or sixteen) superconducting qubits.
Although its physical setup consists of a complex architecture built
by superconducting qubits and readout resonators in a single chip,
the user interface is designed with simple diagrams,
which represent single- and two-qubit gates, and 
is easy to understand and to write the programs
without much prior knowledge of quantum information theory and experimental setups.

We in particular use four qubits in the five-qubit systems (called IBMQX2 and IBMQX4)
and they have a different topology of connectivity for two-qubit gates.
For example,
they provide a controlled-NOT (CNOT) gate at the end-user level
but the physical two-qubit gate is actually performed by a cross-resonance gate \cite{CRgate,CRgate2}, which implies that additional single-qubit gates are required to match the desired CNOT gate.
Fortunately,
single-qubit gates in their transmon qubits are very accurate
and the fidelity of gate operations mostly depends on that of the cross-resonance gate
and the readout errors after the total quantum operation. For example, we utilize single-qubit $R_z$ gates for the algorithms as much as we can because this can be realized without applying any microwave but with only shifting the phase of the next applied microwave \cite{SarahPRA}. 

Because the IBMQX setup shows the daily small fluctuation of parameters,
they provide average device calibrations,
which might be useful for understanding the imperfection of the experimental data.
For example,
the transmon\rq{}s energy frequency (between $\ket{0}$ and $\ket{1}$) is roughly about 5 GHz,
which is fit to the microwave frequency with 6 cm wavelength.
Importantly,
one of the important measures for coherence time is $T_1 \approx 50\, \mu$s,
and it approximately limits the total operation time $t$ in performance of quantum processing
such as $\ket{1}\bra{1} \rightarrow e^{-t/T_1}\ket{1}\bra{1}$ for a single-qubit decay rate.
For example,
the CNOT gate (consisting of a cross-resonance gate and a few single-qubit gates) takes
around 200 ns,
and it roughly indicates that 50 times of CNOT gates might not exceed the fidelity 0.82
because $e^{-1/5} \approx 0.82$ at the current IBMQX setup.
Therefore,
the hybrid quantum algorithm might be beneficial
for experimental demonstrations under practical circumstances
because it has simpler quantum gates with the support of classical information processing.

\begin{figure}[t]
\centering
\includegraphics[width=.6\linewidth,trim=0cm 0cm 0cm 0cm]{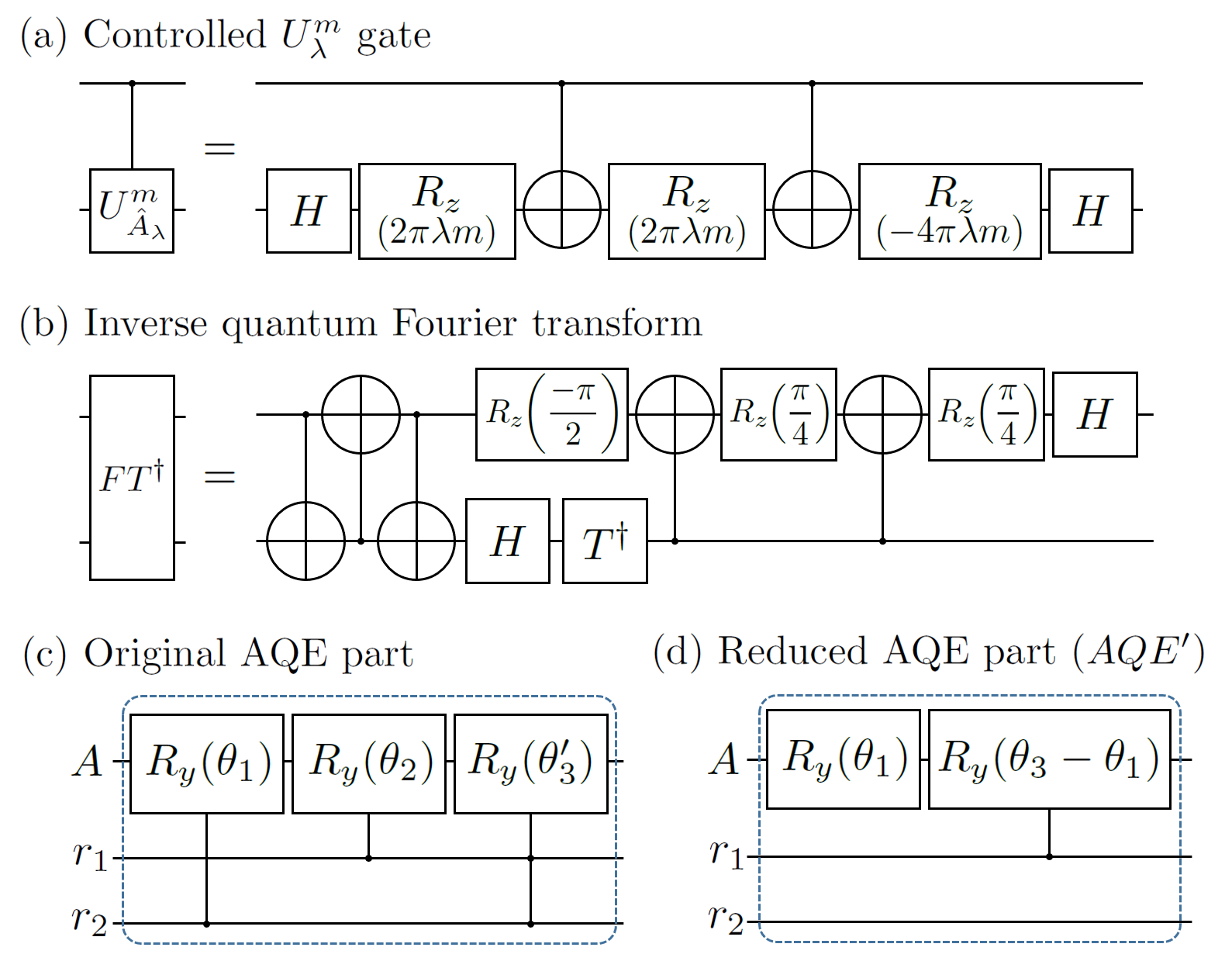}
\caption{
The circuit implementations on IBMQX setups:
(a) the controlled $U_\lambda^m$ gate
where $U_{{\hat{A}_\lambda}}^m=(e^{2\pi i {\hat{A}}_\lambda})^m$ for $m\in\mathbb{Z}$,
(b) the inverse quantum Fourier transform for two qubits,
(c) the original AQE part,
and (d) the reduced AQE part for the reduced HHL part when $\lambda=1/4,2/4$.
Here, $\theta'_3:=\theta_3-(\theta_1+\theta_2)$.
}
\label{fig:circuit_implementation}
\end{figure}
\subsection{Setups for circuit implementations on IBMQX}
We now describe experimental setups of the hybird HHL algorithm
compared with the original HHL algorithm with two qubit register
to solve the linear equation given
by the parameterized matrix ${\hat{A}}_\lambda$ in Eq.~(\ref{eq:specific_form2}).
In addition,
we only deal with the matrices ${\hat{A}}_{1/4}$ and ${\hat{A}}_{2/4}$,
since ${\hat{A}}_{1/4}$ and ${\hat{A}}_{3/4}$ have the same eigenvalues.
In the IBMQX setups, it is also possible to test the algorithms by using a three-qubit register.
However, the complex circuit implementations dramatically decrease
the fidelities of the solutions beyond the analysis scope.
More importantly,
because the original and hybrid HHL algorithms exactly find
the same solution of ${\hat{A}}_\lambda\vec{x}=\vec{b}$
for the ideal (no-error) cases that $\lambda=1/4, 2/4$,
it is crucial to compare the performance of the original and hybrid HHL algorithms
under the IBMQX setups under error-propagating circumstances. Note that a similar experimental investigation has been recently shown
with fixed matrix $\hat{A}$, which cannot cover the class of our parameterized matrix in Eq.~(\ref{eq:specific_form2}) \cite{Other4QHHL}. 

As explained in Section \ref{sec:HHL} and Fig.~\ref{fig:circuit_two_algorithm},
the original HHL algorithm consists of the QPE, the AQE,
and the inverse QPE with a qubit measurement on the ancillary qubit,
as shown in the top of Fig.~\ref{fig:circuit_two_algorithm}.
The QPE part is mainly decomposed by the parts (a) and (b)
in Fig.~\ref{fig:circuit_two_algorithm} or Fig.~\ref{fig:circuit_implementation}.
The first part (a) consists of two controlled unitary gates
whose circuit implementations~\cite{BBC95} are found in Fig.~\ref{fig:circuit_implementation} (a).
The second part (b) is the inverse of the two-qubit QFT,
which is a combination of a SWAP gate, two CNOT gates, and some single-qubit gates 
shown in Fig.~\ref{fig:circuit_implementation} (b).
After the inverse QPE part,
if the ancillary qubit is measured in $\ket{1}_A$,
the register qubits always become $\ket{00}_{r_1 r_2}$ in principle.
However,
the propagated errors during the whole operation time might cause the other outcomes ($\neq \ket{00}_{r_1 r_2}$) in real experiments. This can be verified by setting the measurements of register qubits in Fig.~\ref{fig:circuit_two_algorithm} (c) to post-select successful outcomes.  

For the hybrid HHL algorithm in Fig.~\ref{fig:circuit_two_algorithm},
classical computing is sandwiched between two quantum computing parts.
The first part of the quantum algorithm is called QPEA
similar to the QPE part in the original HHL.
After the measurement of the two-qubit register (step 1), 
the analysis from classical computing decides the operation angles ($\theta_j$) in the reduced AQE circuit ($AQE\rq{}$) with respect to the measured first two digits in $r_1$ and $r_2$ (step 2). Finally, the chosen angles from the classical imformation are applied in the lightened circuit of $AQE \rq$ (step 3). The original and reduced AQE circuits are shown in Fig.~\ref{fig:circuit_implementation} (c) and (d), respectively.

Therefore,
we will show the experimental results of QPEA and the reduced HHL parts with $\lambda=1/4,1/2$
compared with the original HHL algorithm in the next subsection.
If we consider a general case that $\lambda\neq  k/4$ with $k=1,2,3$,
we cannot exactly estimate the eigenvalues of ${\hat{A}}_\lambda$
from the probability distribution given in Fig.~\ref{fig:graph_fidelity} (b) and request more register qubits for the algorithm, however,
it also indicates that a small variance of eigenvalues
($|\lambda - k/4| = \delta$ with small $\delta$)
would give us a high fidelity of the solution state (even better than using a three-qubit register in principle) as shown in Fig.~\ref{fig:graph_fidelity} (a).

\begin{figure}
\begin{minipage}[b]{.49\linewidth}
\includegraphics[width=\linewidth,trim=0cm 0cm 0cm 0cm]{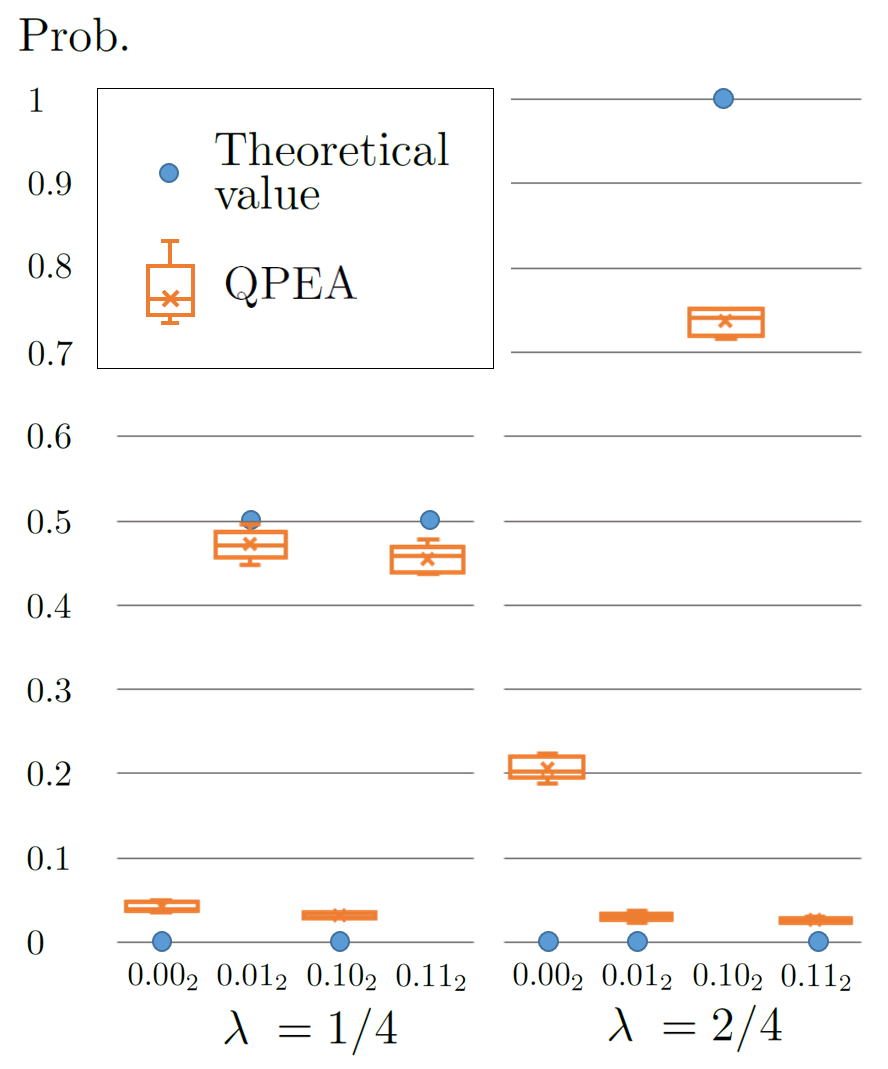}
\center{(a)}
\end{minipage}
\begin{minipage}[b]{.49\linewidth}
\includegraphics[width=\linewidth,trim=0cm 0cm 0cm 0cm]{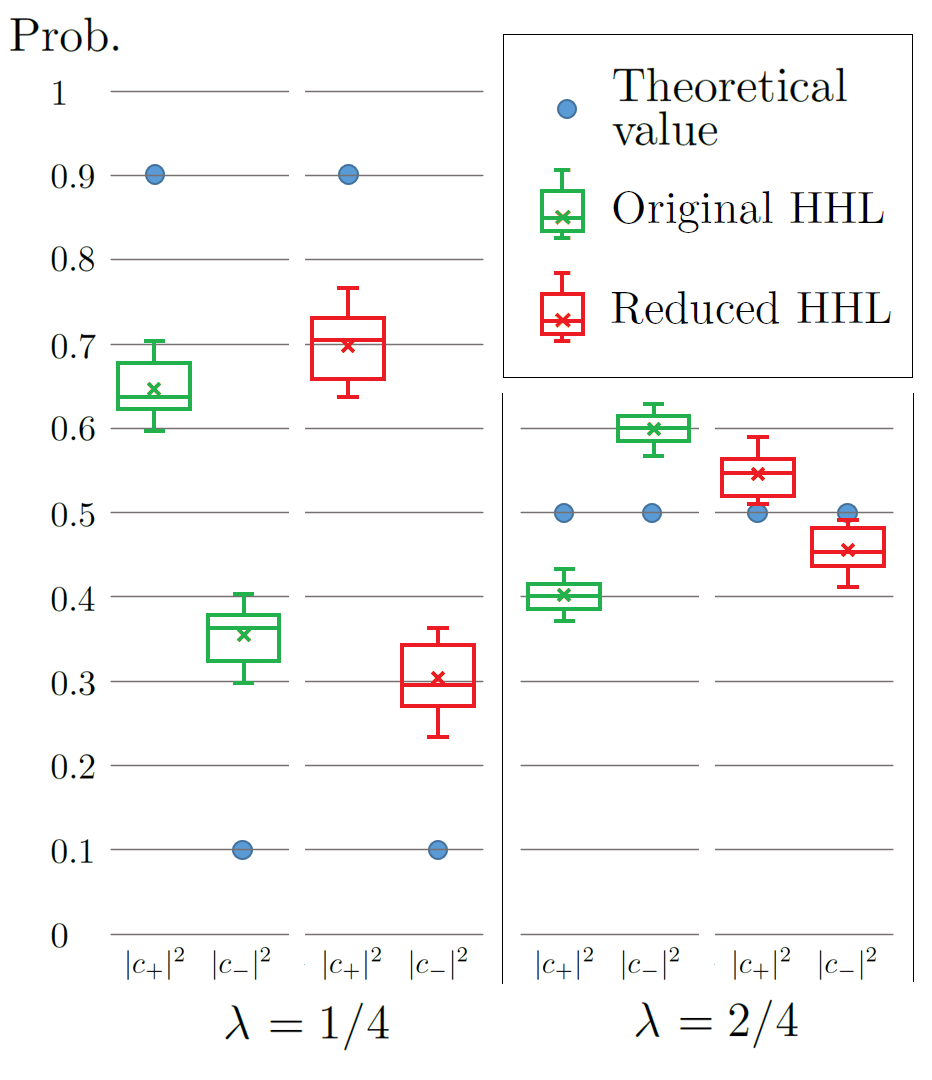}
\center{(b)}
\end{minipage}
\caption{
Experimental results on IBMQX4:
(a) the QPEA with 2-qubit register. 
(b) the original HHL algorithm and reduced HHL part.
}
\label{fig:Test_result_QPEA}
\end{figure}
\subsection{Experimental results for QPEA and Reduced HHL parts}
We here examine the original and hybrid HHL algorithms at the setups of IBMQX4. 
The experiments of QPEA are performed by using six CNOTs,
and the original HHL algorithm requires 28 CNOT gates
while the reduced HHL algorithm now has 14 CNOT gates.
Thus,
this indicates the reduction of 14 CNOT gates from the original HHL algorithm.
Note that ten sets of experimental data are used for each $\lambda$
with 1024 single-shot readouts per set for individual algorithms.

The QPE in the original HHL and QPEA in the hybrid HHL commonly have (a) a set of controlled unitary operations with $U_{\lambda}^m=(e^{2\pi {\hat{A}}_\lambda})^m$
and (b) the inverse QFT for two qubits shown in Fig.~\ref{fig:circuit_implementation}. The only difference between them is the measurement part in QPEA.
The hybrid scheme first accepts the results of QPEA to estimate partial information of eigenvalues
in two bits used for building the $QAE\rq{}$ circuit
as shown in Fig.~\ref{fig:circuit_implementation} (d).
From the results depicted in Fig.~\ref{fig:Test_result_QPEA} (a),
we can verify that the performance of the QPEA on IBMQX is quite useful to confirm the first two bits of the eigenvalues of ${\hat{A}}_\lambda$ even with some unavoidable errors in the IBMQX circuit.


In Fig.~\ref{fig:Test_result_QPEA} (b),
the probabilities of theoretical and two experimental cases are depicted
for both the original and reduced HHL algorithms.
The solution state $\ket{x}_V$ is measured in observable $X$
in the basis set of $\{\ket{+},\ket{-} \}$
to verify the experimental solution state for both algorithms.
We also perform ten sets of data with 1024 single shots per set.
Since $\ket{x}_V$ is represented by $c_{+}\ket{+}+c_{-}\ket{-}$,
where $c_{\pm}\in\mathbb{C}$ such that $|c_{+}|^2+|c_{-}|^2=1$,
probabilities $|c_{+}|^2$ and $|c_{-}|^2$ tell us
how close $\ket{x}_V$ is to the theoretical solution.
As mentioned earlier,
the solution of the linear equation is obtained
when the ancillary qubit is $\ket{1}$ after measured by $Z$.
Theoretically,
this probability of the post-measurement state $\ket{1}$ is quit small
in the case of the linear equation in Eq.~(\ref{eq:specific_form}). 
This means that only the minority data contribute to plot
the probabilities $|c_{+}|^2$ and $|c_{-}|^2$.

From Fig.~\ref{fig:Test_result_QPEA},
we can know that the solution of the reduced HHL part is more accurate than
that of the original HHL algorithm.
The figure shows that,
if we accept to use the first two bits of QPEA for the reduced AQE circuit,
we can conclude that the results of the hybrid algorithm are closer to the theoretical results
than that of the original HHL algorithm in the IBMQX setups.

\section{Conclusion} \label{sec:Conclusion}
We have described the HHL algorithm which solves a quantized version of given linear equations.
We have especially analyzed the QPE part of the HHL algorithm,
and have devised the hybrid version of the HHL algorithm. Under the IBMQX setups,
we have shown that the hybrid algorithm can reduce the number of two-qubit gates, and thus has more enhanced performance than that of the HHL algorithm for some specific linear equations.

The hybrid HHL algorithm stems from the fact
that the QPE part of the HHL algorithm is identical to the QPEA without measurement.
It follows that the AQE part of the original HHL algorithm can be reconstructed
if we are able to obtain classical information from measurement outcomes of the prior QPEA
to solve a linear equation.
We remark that
an iterative QPEA~\cite{ZKRO13} can be used as the first step of our hybrid algorithm.
Since the iterative QPEA does not need the quantum Fourier transform for its implementation,
the small number of qubits is required.
So the use of the iterative QPEA would improve the resource efficiency of the hybrid algorithm. 
In addition,
there have been some results in literature~\cite{HBBWP07,DJSW07}
which generalize or improve the QPEA,
and we expect that combining these results with the hybrid algorithm
leads to other new hybrid quantum linear equation algorithms.
Finally,
there have been developed some quantum algorithms,
such as the quantum counting algorithm~\cite{BHT98},
the quantum machine learning algorithm~\cite{LGZ16},
and the high-order quantum algorithm~\cite{B14},
which have relevance to the QPEA or the HHL algorithm.
Hence, it would be interesting to find out hybrid versions for these algorithms.

Regarding quantum supremacy~\cite{P12},
IBM has currently announced a new term, \emph{quantum volume}~\cite{BBCGS17},
which measures the useful amount of quantum computing done
by a quantum device with specific number of qubits and error rate.
In addition,
error mitigation approaches \cite{error-mitigation,error-mitigation1,error-mitigation-exp}
have shown a new direction of managing the error accuracy for specific cases.
In order to apply this extrapolation scheme,
the amount of errors should be sufficiently small to claim
that the error-propagation curve is linear.
As mentioned earlier, 
the reduced HHL part of our hybrid algorithm can be implemented
by a smaller number of quantum gates,
which reduces the total error rate from the gates.
Hence,
we expect that the technique in the hybrid algorithm can be adopted in quantum algorithms
to show quantum supremacy.

\section{Acknowledgments}
This research was supported 
by Basic Science Research Program 
through the National Research Foundation of Korea (NRF) funded 
by the Ministry of Science and ICT (NRF-2016R1A2B4014928) and the MSIT (Ministry of Science and ICT), Korea, under the ITRC (Information Technology Research Center) support program (IITP-2018-2018-0-01402) supervised by the IITP (Institute for Information \& communications Technology Promotion).
JJ acknowledges support from the EPSRC National Quantum Technology Hub in Networked Quantum Information Technology
(EP/M013243/1).
We acknowledge use of the IBMQX for this work.
The views expressed are those of the authors 
and do not reflect the official policy or position of IBM 
or the IBMQX team.

%
%
\appendix

\section{Quantum phase estimation algorithm} \label{app:QPEA}
Suppose that a matrix $\hat{A}$ is Hermitian
with an eigenvalue $\lambda$ in $(0,1)$ with respect to the corresponding eigenstate $\ket{u}$.
For the unitary operation $U_{\hat{A}}$ defined as in Eq.~(\ref{eq:unitary_from_A}),
we obtain
\begin{equation}
U_{\hat{A}}\ket{u}=e^{2\pi i \hat{A}}\ket{u}=e^{2\pi i \lambda}\ket{u}.
\end{equation}
The aim of the QPEA is to find out an estimated value of $\lambda$,
which is given by a binary string.
The QPEA is performed with the input eigenstate $\ket{u}$ and $n$-qubit register.
Then the estimated value of $\lambda$ is obtained by measuring this $n$-qubit register
as described in Fig.~\ref{fig:algorithm_QPEA}.

Specifically,
let us explain a process of the QPEA on $\hat{A}$ in Eq.~(\ref{eq:linear_system_of_equations})
and its eigenstate $\ket{u_j}$ in Eq.~(\ref{eq:spectral_decomp}).
The total input state of the QPEA is initialized
in a quantum state $\ket{0}_R^{\otimes n}\otimes\ket{u_j}_V$
and Hadamard operations are firstly performed in $n$-qubit register,
as shown in Fig.~\ref{fig:algorithm_QPEA}.
After $n$ controlled unitary gates, controlled-$U_{\hat{A}}^{2^{n-1}}$,
the state (a) in Fig.~\ref{fig:algorithm_QPEA} is given in 
\begin{equation}
\frac{1}{\sqrt{2^n}}\sum_{y=0}^{2^n-1} e^{2\pi i \lambda_j y}\ket{y}_R\otimes\ket{u_j}_V.
\end{equation}
Then,
the inverse of quantum Fourier transform is applied in the register qubits,
and the state (b) in Fig.~\ref{fig:algorithm_QPEA} can be written in
\begin{equation}
\frac{1}{2^n}\sum_{x,y=0}^{2^n-1}
e^{{2\pi i y\left(\lambda_j-\frac{x}{2^n}\right)}} \ket{x}_R\otimes\ket{u_j}_V.
\label{1st_QPEA}
\end{equation}
Finally, each qubit in the register system $R$ is measured with observable $Z$.
For large $n$,
if the measured outcome $x$ in register qubits is close to $\lambda_j(n)$,
we find that $\Pr(x)$ is also close to one.
Otherwise,
if $x$ is close to another $n$-bit string which is not $\lambda_j(n)$,
$\Pr(x)$ is close to zero. 
Therefore for sufficiently large $n$, 
we are able to obtain the $n$-binary estimation $\lambda_j(n)$ of $\lambda_j$
from the probability distribution of the measurement outcomes.

\begin{figure}
\centering
\includegraphics[width=.45\linewidth,trim=0cm 0cm 0cm 0cm]{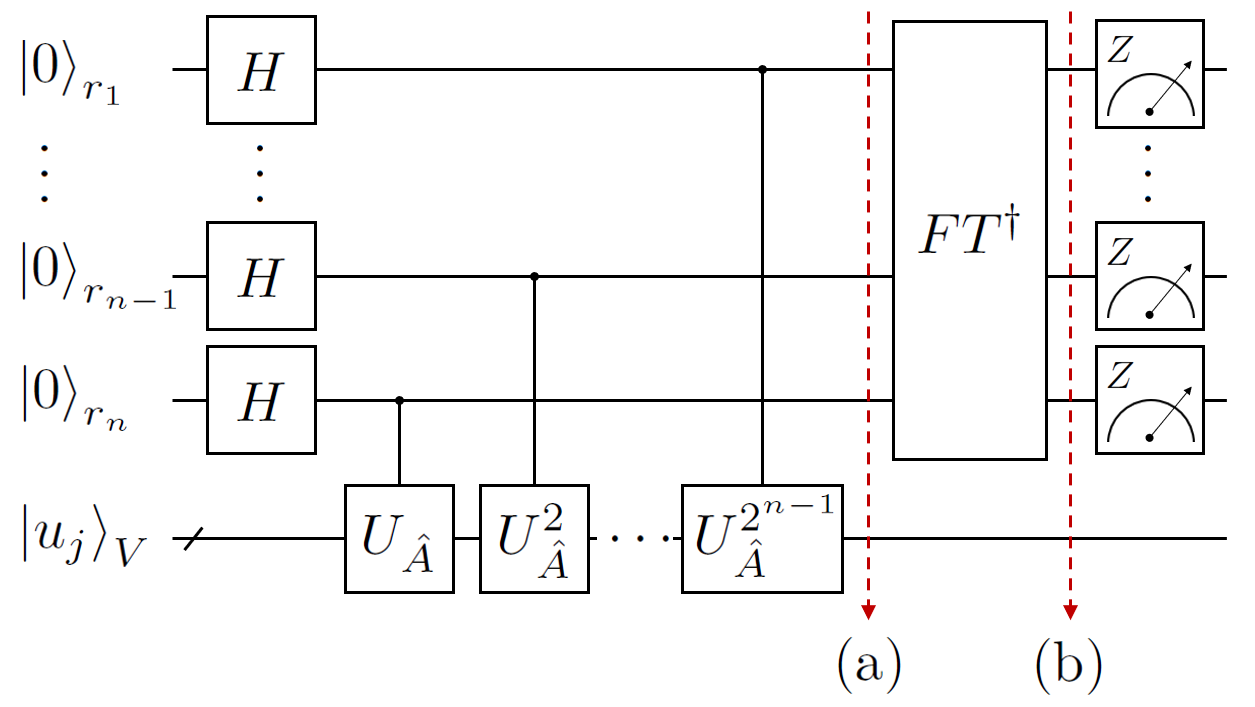}
\caption{
The circuit diagram of the QPEA with $n$-qubit register:
$H$ and $FT^\dagger$  are the Hadamard gate and the inverse quantum Fourier transform.
At the end, every register qubit is measured in observable $Z$.
}
\label{fig:algorithm_QPEA}
\end{figure}

\section{Fidelities in Fig.~\ref{fig:graph_fidelity} (a)} \label{app:Fidelities}
We here present explicit expressions of the fidelities in Fig.~\ref{fig:graph_fidelity} (a).
For $i=1,2,3$ and $\lambda\in(0,1)$,
denote $F_i(\lambda)$ as the fidelity
between an exact normalized solution and the solution obtained
from the HHL algorithm with $i$-qubit register.
Let $t_{\lambda}=e^{2 i \pi \lambda}$, and $t_{\lambda}^*$ be the complex conjugate of $t_{\lambda}$.
Then

\begin{equation}
F_1(\lambda)=
\frac{1}{2}
\left(
1 + (t_{\lambda}+t_{\lambda}^*)
\frac{ (-1 + \lambda) \lambda}{1 - 2 \lambda + 2 \lambda^2}
\right). \nonumber
\end{equation}

Let $X_{\lambda}=(40 + 32 i) - (129 + 64 i) \lambda + 129 \lambda^2$
and $Y_{\lambda}=(9 + 32 i) - (146 + 64 i) \lambda + 146 \lambda^2$.
Then
\begin{eqnarray}
F_2(\lambda)&=&
(t_{\lambda}^*)^3
[(
25 + 80 t_{\lambda} + 171 t_{\lambda}^2 + 171 t_{\lambda}^8 + 80 t_{\lambda}^9
+ 25 t_{\lambda}^{10}) (-1 + \lambda) \lambda +  4 t_{\lambda}^4 X_{\lambda}
+ 4 t_{\lambda}^6 X_{\lambda}^* + 2 t_{\lambda}^3 Y_{\lambda} + 2 t_{\lambda}^7 Y_{\lambda}^*
 \nonumber \\
& &+ 4 t_{\lambda}^5 (89 - 170 \lambda + 170 \lambda^2)]/
[4 (9 + 80 t_{\lambda} + 178 t_{\lambda}^2
+ 80 t_{\lambda}^3 + 9 t_{\lambda}^4) (1 - 2 \lambda + 2 \lambda^2)]. \nonumber
\end{eqnarray}

Let $A=140+105i$,
$B=(208+128i)\sqrt{2}$,
$C=8(35 + 52\sqrt{2})$,
$D=8(-35 + 52\sqrt{2})$,
$E=2(105 + 128\sqrt{2})$,
$F=-210 + 256\sqrt{2}$,
$G=11025 + 76672\sqrt{2}$,
and $H=-11025 + 76672\sqrt{2}$,
and let
$\alpha_{\lambda}=-(t_{\lambda}^{*})^{14}/(128(1 - 2 \lambda + 2 \lambda^2))$,
$\beta_{\lambda}=\alpha_{\lambda}(-1 + t_{\lambda}^{8})^2$,
$\gamma_{\lambda}=350 + 608i - 700\lambda$,
$\phi_{\lambda}=315 + 420i + (384 - 624i)\sqrt{2} - 3E\lambda$,
$\xi_{\lambda}=-304 + 175i + 608\lambda$.
Then we have the fidelity
\begin{equation}
F_3(\lambda)=\frac{\sum_{j=1}^{8} N_j(\lambda)}{D(\lambda)}, \nonumber
\end{equation}
where
\begin{eqnarray}
N_1(\lambda)&=&
\alpha_{\lambda}[
A^*+ B - 1276it_{\lambda}^{3}
+ 8712it_{\lambda}^{7} - 1276it_{\lambda}^{11}- C\lambda
+ (6t_{\lambda}^{5}+2t_{\lambda}^{13})\xi_{\lambda}^*
- (2t+6t_{\lambda}^{9})\xi_{\lambda} \nonumber \\
& & + (5t_{\lambda}^{4}+3t_{\lambda}^{12})
(A^* - B +D\lambda)- (3t_{\lambda}^{2}+5t_{\lambda}^{10})
(A -B^* +D\lambda) - 7t_{\lambda}^{8}(-A^* - B +C\lambda) \nonumber \\
& &+ (7t_{\lambda}^{6}+t_{\lambda}^{14})
(-A -B^* +C\lambda) 
]^2, \nonumber \\
N_2(\lambda)&=&
\beta_{\lambda}[
A^*i + Bi + F\lambda + t_{\lambda}^{5}\gamma_{\lambda}^*
+ t\gamma_{\lambda} -1276t_{\lambda}^{3}(-1 + 2 \lambda) + t_{\lambda}^{4}\phi_{\lambda}
+ t_{\lambda}^{2}{\phi_{\lambda}}^* + t_{\lambda}^{6}(-Ai -B^*i + F\lambda)
]^2, \nonumber \\
N_3(\lambda)&=&
\beta_{\lambda}[
A^*i +Bi + F\lambda
+ t_{\lambda}^{5}\gamma_{\lambda}^*
+ t_{\lambda}\gamma_{\lambda}
+ t_{\lambda}^{4}(A^*i -Bi -E\lambda)
+ t_{\lambda}^{2}(-Ai +B^*i -E\lambda)
+ t_{\lambda}^{6}(-Ai -B^*i + F\lambda)
]^2, \nonumber \\
N_4(\lambda)&=&
\beta_{\lambda}[
-A^* - B + C\lambda
+ 2t_{\lambda}^{5}\xi_{\lambda}^*
+ 2t_{\lambda}\xi_{\lambda}
+ t_{\lambda}^{4}(A^* - B +D\lambda)
+ t_{\lambda}^{2}(A -B^* +D\lambda)
+ t_{\lambda}^{6}(-A -B^* +C\lambda)
]^2, \nonumber \\
N_5(\lambda)&=&
\beta_{\lambda}[
A^*i +Bi + F\lambda
+ t_{\lambda}^{4}(A^*i -Bi -E\lambda)
+ t_{\lambda}^{2}(-Ai +B^*i -E\lambda)
+ t_{\lambda}^{6}(-Ai -B^*i + F\lambda)
]^2, \nonumber \\
N_6(\lambda)&=&
\beta_{\lambda}[
-A^* - B + C\lambda
+ t_{\lambda}^{4}(A^* - B +D\lambda)
+ t_{\lambda}^{2}(A -B^* +D\lambda)
+ t_{\lambda}^{6}(-A -B^* +C\lambda)
]^2, \nonumber \\
N_7(\lambda)&=&
\beta_{\lambda}[
-A^* - B + C\lambda
+ t_{\lambda}^{4}(-A^* + B -D\lambda)
- t_{\lambda}^{2}(A -B^* +D\lambda)
+ t_{\lambda}^{6}(-A -B^* +C\lambda)
]^2, \nonumber \\
N_8(\lambda)&=&
\beta_{\lambda}[
A^*i +Bi + F\lambda
+ t_{\lambda}^{6}(-Ai -B^*i + F\lambda)
+ t_{\lambda}^{2}(A^*i -B^*i + 2E\lambda)
+ t_{\lambda}^{4}(-A^*i +Bi + 2E\lambda)
]^2, \nonumber \\
D(\lambda)&=&
(t_{\lambda}^*)^{7}[
H - 75950t_{\lambda} -3Gt_{\lambda}^{2}
- 586524t_{\lambda}^{3} -5Gt_{\lambda}^{4}
- 227850t_{\lambda}^{5} +7Ht_{\lambda}^{6} \nonumber \\
& &+2133448t_{\lambda}^{7} +7Ht_{\lambda}^{8}
- 227850t_{\lambda}^{9} -5Gt_{\lambda}^{10}
- 586524t_{\lambda}^{11} -3Gt_{\lambda}^{12}
- 75950t_{\lambda}^{13} + Ht_{\lambda}^{14}
]. \nonumber
\end{eqnarray}

\bibliography{IBMQX}
\end{document}